\begin{document}

\title{Millimeter Wave Channel Estimation via Exploiting Joint Sparse and Low-Rank Structures}

\author{Xingjian Li, Jun Fang, Hongbin Li,~\IEEEmembership{Senior
Member,~IEEE}, and Pu Wang
\thanks{Xingjian Li and Jun Fang are with the National Key Laboratory
of Science and Technology on Communications, University of
Electronic Science and Technology of China, Chengdu 611731, China,
Email: JunFang@uestc.edu.cn}
\thanks{Hongbin Li is
with the Department of Electrical and Computer Engineering,
Stevens Institute of Technology, Hoboken, NJ 07030, USA, E-mail:
Hongbin.Li@stevens.edu}
\thanks{Pu Wang is
with the Mitsubushi Electric Research Laboratories, Cambridge, MA
02139, USA, E-mail: pwang@merl.com}
\thanks{This work was supported in part by the National Science
Foundation of China under Grant 61522104, and the National Science
Foundation under Grant ECCS-1408182 and Grant ECCS-1609393.}}

\maketitle



\begin{abstract}
We consider the problem of channel estimation for millimeter wave
(mmWave) systems, where, to minimize the hardware complexity and
power consumption, an analog transmit beamforming and receive
combining structure with only one radio frequency (RF) chain at
the base station (BS) and mobile station (MS) is employed. Most
existing works for mmWave channel estimation exploit sparse
scattering characteristics of the channel. In addition to
sparsity, mmWave channels may exhibit angular spreads over the
angle of arrival (AoA), angle of departure (AoD), and elevation
domains. In this paper, we show that angular spreads give rise to
a useful low-rank structure that, along with the sparsity, can be
simultaneously utilized to reduce the sample complexity, i.e. the
number of samples needed to successfully recover the mmWave
channel. Specifically, to effectively leverage the joint sparse
and low-rank structure, we develop a two-stage compressed sensing
method for mmWave channel estimation, where the sparse and
low-rank properties are respectively utilized in two consecutive
stages, namely, a matrix completion stage and a sparse recovery
stage. Our theoretical analysis reveals that the proposed
two-stage scheme can achieve a lower sample complexity than a
direct compressed sensing method that exploits only the sparse
structure of the mmWave channel. Simulation results are provided
to corroborate our theoretical results and to show the superiority
of the proposed two-stage method.
\end{abstract}


\begin{keywords}
MmWave channel estimation, angular spread, jointly sparse and
low-rank, compressed sensing.
\end{keywords}


\section{Introduction}
Millimeter wave (mmWave) communication is a promising technology
for future 5G cellular networks
\cite{RappaportMurdock11,RanganRappaport14,GhoshThomas14}. It has
the potential to offer gigabits-per-second communication data
rates by exploiting the large bandwidth available at mmWave
frequencies. However, a key challenge for mmWave communication is
that signals incur a much more significant path loss over the
mmWave frequency bands as compared with the path attenuation over
the lower frequency bands \cite{SwindlehurstAyanoglu14}. To
compensate for the significant path loss, large antenna arrays
should be used at both the base station (BS) and the mobile
station (MS) to provide sufficient beamforming gain for mmWave
communications \cite{AlkhateebMo14}.

Although directional beamforming helps overcome the path loss
issue, it also complicates the mmWave communication system design.
Due to the narrow beam of the antenna array, communication between
the transmitter and the receiver is possible only when the
transmitter's and receiver's beams are well-aligned, i.e. the beam
directions are pointing towards each other. Therefore beamforming
training is required to search for the best beamformer-combiner
pair that gives the highest channel gain. One method is to
exhaustively search for all possible beam pairs to identify the
best beam alignment. Nevertheless, this exhaustive search may lead
to a prohibitively long training process, particularly when the
number of antennas at the BS and MS is large. To address this
issue, an adaptive beam alignment algorithm was proposed in
\cite{HurKim13}, where a hierarchical multi-resolution beamforming
codebook was employed to avoid the costly exhaustive sampling of
all pairs of transmit and receiver beams. Nevertheless, this
adaptive beam alignment requires a feedback channel from the
receiver to the transmitter, which may not be available before the
communication between the receiver and the transmitter is
established. Recently, a novel beam steering scheme called as
``Agile-Link'' \cite{AbariHassanieh16} was proposed to find the
correct beam alignment without scanning the space. The main idea
of the Agile-Link is to harsh the beam directions using a few
carefully chosen hash functions, and steer the antenna array to
beam along multiple directions simultaneously.



Unlike beam scanning techniques whose objective is to find the
best beam pair, another approach is to directly estimate the
mmWave channel or its associated parameters, e.g. angles of
arrival/departure, e.g.
\cite{RamasamyVenkateswaran12a,RamasamyVenkateswaran12b,
AlkhateebyLeus15,AlkhateebAyach14,SchniterSayeed14,KimLove15,
MarziRamasamy16,GaoDai15b,GaoDai16,ZhouFang16}. In particular, by
exploiting the sparse scattering nature of mmWave channels, mmWave
channel estimation can be formulated as a sparse signal recovery
problem
\cite{AlkhateebyLeus15,AlkhateebAyach14,SchniterSayeed14,KimLove15,
MarziRamasamy16,GaoDai15b,GaoDai16}, and it has been shown that
substantial reduction in training overhead can be achieved.
Besides the compressed sensing techniques, low-rank tensor
factorization methods \cite{ZhouFang16,ZhouFang17} were recently
proposed to exploit the low-rank structure of mmWave channels, and
have been shown to outperform the compressed sensing-based methods
in terms of both estimation accuracy and computational complexity.





In addition to the sparse scattering characteristic, several
real-world measurements in dense-urban propagation environments
(e.g.
\cite{SamimiWang13,ZhaoMayzus13,RappaportSun13,AkdenizLiu14})
reveal that mmWave channels spread in the form of clusters of
paths over the angular domains including the angle of arrival
(AoA), angle of departure (AoD), and elevation. In
\cite{RappaportSun13,AkdenizLiu14}, real-world channel
measurements at 28 and 73 GHz in New York city were reported, in
which the angular spread has been explicitly studied in terms of
the root mean-squared (rms) beamspread in the different angular
(AoA, AoD, and elevation) dimensions. Specifically, the measured
AoA spreads (in terms of rms) are $15.5^\circ$ and $15.4^\circ$,
respectively, for the two carrier frequencies, while the measured
AoD spreads (in terms of rms) are $10.2^\circ$ and $10.5^\circ$,
respectively. Moreover, the angular spread increases as the
spatial resolution becomes finer when the number of antennas at
the BS/MS increases. As demonstrated in \cite{WangPajovic17}, the
angular spreads give rise to a block-sparse structure that can be
exploited to improve the mmWave channel estimation performance.

In this paper, we further show that, in the presence of angular
spreads, the mmWave channel exhibits a joint sparse and low-rank
structure. To better utilize the joint structure, we propose a
two-stage compressed sensing scheme, where a low-rank matrix
completion stage is first performed and then followed by a
compressed sensing stage to recover the mmWave channel. Our
analysis reveals that the number of measurements required for
exact channel recovery is $\mathcal{O}(p L^2)$ for the proposed
two-stage method, where $L$ represents the number of scattering
clusters and $p$ is a quantity that measures the maximum angular
spread among all scattering clusters. While a direct compressed
sensing method that exploits only the sparsity of mmWave channels
requires a number of measurements of $\mathcal{O}(p^2 L)$. Thus
the proposed two-stage compressed sensing method achieves a lower
sample complexity than the direct compressed sensing method when
$L<p$, which is very likely to hold in dense-urban propagation
environments where the angular spreads over the AoA/AoD/elevation
domains could be relatively large.



The rest of the paper is organized as follows. The system model
and the problem formulation are discussed in Section
\ref{sec:system-model}. In Section \ref{sec:channel-model}, we
introduce a geometric mmWave channel model with angular spreads
and show that the mmWave channel exhibits a joint sparse and
low-rank structure. A two-stage compressed sensing method is
developed in Section \ref{sec:proposed-method}, along with a
theoretical analysis for the two-stage method. Simulation results
are provided in Section \ref{sec:experiments}, followed by
concluding remarks in Section \ref{sec:conclusion}.


\begin{figure}[!t]
\centering
\includegraphics[width=3.5in]{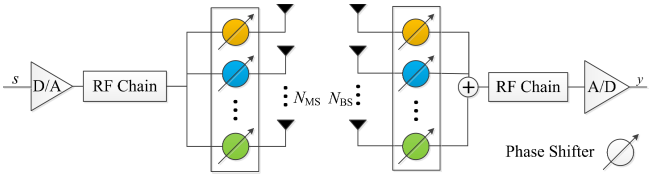}
\caption{A block diagram of the analog transmit beamforming and
receive combining structure.} \label{fig1}
\end{figure}

\section{System Model and Prior Work} \label{sec:system-model}
Consider a point-to-point mmWave MIMO system consisting of
$N_{\text{BS}}$ antennas at the BS and $N_{\text{MS}}$ antennas at
the MS. Since the radio frequency (RF) chains are costly and
power-consuming at mmWave frequency bands, to minimize the
hardware complexity and power consumption, we focus on an analog
transmit beamforming and receive combining structure (see Fig.
\ref{fig1}) where only one RF chain is employed at the BS and MS.
In this structure, transmit beamforming and receive combining are
implemented in the analog domain using digitally controlled phase
shifters. At time instant $t$, the transmitter employs a
beamforming vector
$\boldsymbol{f}(t)\in\mathbb{C}^{N_{\text{MS}}}$ to transmit a
symbol $s(t)$, and at the receiver, the received signals on all
antennas are combined with a receive combining vector
$\boldsymbol{z}(t)\in\mathbb{C}^{N_{\text{BS}}}$. The combined
signal at the receiver can therefore be expressed as
\begin{align}
y(t)=\boldsymbol{z}^{H}(t)\boldsymbol{H}\boldsymbol{f}(t)s(t)+w(t)
\quad \forall t=1,\ldots,T \label{data-model}
\end{align}
where $\boldsymbol{H}\in\mathbb{C}^{N_{\text{BS}}\times
N_{\text{MS}}}$ is the channel matrix, and $w(t)$ denotes the
additive Gaussian noise with zero mean and variance $\sigma^2$.
Without loss of generality, we set $s(t)=1$ during the training
phase. Note that since the precoder and combiner are implemented
by analog phase shifters, entries of $\boldsymbol{z}(t)$ and
$\boldsymbol{f}(t)$ have constant modulus.



We see that in mmWave systems, the receiver cannot directly
observe $\boldsymbol{H}$, rather it observes a noisy version of
$\boldsymbol{z}^{H}\boldsymbol{H}\boldsymbol{f}$. This is also
referred to as the channel subspace sampling limitation
\cite{HurKim13,KimLove15}, which makes channel estimation a
challenging problem. By exploiting the sparse scattering nature of
mmWave channels, the channel estimation problem can be formulated
as a sparse signal recovery problem (e.g.
\cite{HurKim13,KimLove15}). Specifically, note that the mmWave
channel is usually characterized by a geometric channel model
(see, e.g. \cite{AlkhateebAyach14})
\begin{align}
\boldsymbol{H} =
\sum_{l=1}^L\alpha_{l}\boldsymbol{a}_{\text{BS}}(\theta_{l})\boldsymbol{a}_{\text{MS}}^H(\phi_{l})
\label{channel-model}
\end{align}
where $L$ is the number of paths, $\alpha_{l}$ is the complex gain
associated with the $l$th path, $\theta_{l}\in[0,2\pi]$ and
$\phi_{l}\in[0,2\pi]$ are the associated azimuth AoA and azimuth
AoD respectively, and
$\boldsymbol{a}_{\text{BS}}\in\mathbb{C}^{N_{\text{BS}}}$
($\boldsymbol{a}_{\text{MS}}\in\mathbb{C}^{N_{\text{MS}}}$) is the
array response vector associated with the BS (MS). Suppose a
uniform linear array (ULA) antenna array is used. Then the
steering vectors at the BS and the MS can be written as
\begin{align}
&\boldsymbol{a}_{\text{BS}}(\theta_{l})=\frac{1}{\sqrt{N_{\text{BS}}}}
\left[1,e^{j\frac{2\pi}{\lambda}d\sin(\theta_{l})},\ldots,e^{j(N_{\text{BS}}-1)
\frac{2\pi}{\lambda}d\sin(\theta_{l})}\right]^T \nonumber\\
&\boldsymbol{a}_{\text{MS}}(\phi_{l})=\frac{1}{\sqrt{N_{\text{MS}}}}
\left[1,e^{j\frac{2\pi}{\lambda}d\sin(\phi_{l})},\ldots,e^{j(N_{\text{MS}}-1)\frac{2\pi}{\lambda}d\sin(\phi_{l})}\right]^T
\nonumber
\end{align}
where $\lambda$ is the signal wavelength, and $d$ is the distance
between neighboring antenna elements. To formulate the channel
estimation as a sparse signal recovery problem, we first express
the channel as a beam space MIMO representation as follows
\begin{align}
\boldsymbol{H} =
\boldsymbol{A}_{\text{BS}}\boldsymbol{H}_v\boldsymbol{A}_{\text{MS}}^H
\label{virtual-channel}
\end{align}
where $\boldsymbol{A}_{\text{BS}}\triangleq
[\boldsymbol{a}_{\text{BS}}(\psi_1),\ldots,\boldsymbol{a}_{\text{BS}}(\psi_{N_1})]$
is an overcomplete matrix ($N_1\geq N_{\text{BS}}$) with each
column a steering vector parameterized by a pre-discretized AoA,
$\boldsymbol{A}_{\text{MS}}\triangleq
[\boldsymbol{a}_{\text{MS}}(\omega_1),\ldots,\boldsymbol{a}_{\text{MS}}(\omega_{N_2})]$
is an overcomplete matrix (i.e. $N_2\geq N_{\text{MS}}$) with each
column a steering vector parameterized by a pre-discretized AoD,
and $\boldsymbol{H}_v\in \mathbb{C}^{N_1\times N_2}$ is a sparse
matrix with $L$ non-zero entries corresponding to the channel path
gains $\{\alpha_{l}\}$. Here the true AoA and AoD parameters are
assumed to lie on the discretized grids.

Substituting (\ref{virtual-channel}) into (\ref{data-model}), we
have
\begin{align}
y(t) &=
\boldsymbol{z}^{H}(t)\boldsymbol{A}_{\text{BS}}\boldsymbol{H}_v
\boldsymbol{A}_{\text{MS}}^H\boldsymbol{f}(t) +w(t) \nonumber \\
&= \left[(\boldsymbol{A}_{\text{MS}}^H\boldsymbol{f}(t))^T\otimes
(\boldsymbol{z}(t)^{H}\boldsymbol{A}_{\text{BS}})\right]\boldsymbol{h} +w(t)\nonumber \\
&= (\boldsymbol{f}(t)^T\otimes
\boldsymbol{z}(t)^{H})(\boldsymbol{A}_{\text{MS}}^*\otimes
\boldsymbol{A}_{\text{BS}})\boldsymbol{h}+w(t)
\end{align}
where $\otimes$ denotes the Kronecker product, $()^{\ast}$
represents the complex conjugate, and $\boldsymbol{h}\triangleq
\text{vec}(\boldsymbol{H}_v)$. Collecting all measurements
$\{y(t)\}$ and stacking them into a vector
$\boldsymbol{y}\triangleq [y_1\phantom{0}\ldots\phantom{0}y_T]^T$,
we arrive at
\begin{align}
\boldsymbol{y} &= \left[
\begin{array}{c}
                       (\boldsymbol{f}(1)^T\otimes \boldsymbol{z}(1)^{H}) \\
                       \vdots \\
                       (\boldsymbol{f}(T)^T\otimes \boldsymbol{z}(T)^{H})
\end{array}
\right](\boldsymbol{A}_{\text{MS}}^*\otimes \boldsymbol{A}_{\text{BS}})\boldsymbol{h}+\boldsymbol{w} \nonumber \\
&\triangleq \boldsymbol{\Psi}\boldsymbol{h} +\boldsymbol{w}
\end{align}
Estimating $\boldsymbol{h}$ now can be formulated as a sparse
signal recovery problem as follows
\begin{align}
\text{min} & \quad\|\boldsymbol{h}\|_1 \nonumber\\
\text{s.t.} & \quad
\|\boldsymbol{y}-\boldsymbol{\Psi}\boldsymbol{h}\|_2\leq\varepsilon
\label{compressed-sensing-formulation}
\end{align}
where $\varepsilon$ is an error tolerance parameter related to
noise statistics. Many efficient algorithms such as the fast
iterative shrinkage-thresholding algorithm (FISTA)
\cite{BeckTeboulle09} can be employed to solve the above sparse
signal recovery problem. Compressed sensing theory tells that, for
the noiseless case, we can perfectly recover a high-dimensional
sparse signal $\boldsymbol{h}$ from a much lower dimensional
linear measurement vector $\boldsymbol{y}$. Thus the compressed
sensing-based method has the potential to achieve a substantial
training overhead reduction.



\section{Channel Model with Angular Spreads}
\label{sec:channel-model} In addition to sparsity, mmWave channels
may exhibit angular spreads over the AoA, AoD, and elevation
domains \cite{RappaportSun13,AkdenizLiu14}. The angular spreads
are a result of scattering clusters, where each cluster may
contribute with multiple rays/paths with closely-spaced AoAs, AoDs
and elevations. To more accurately model the angular spread
characteristics of mmWave channels, we adopt the following
geometric channel model with $L$ clusters
\begin{align}
\boldsymbol{H} = \sum_{l=1}^L\bigg(\sum_{i=1}^{I}\alpha_{l,i}
\boldsymbol{a}_{\text{BS}}(\theta_{l}-\vartheta_{l,i})\bigg)
\bigg(\sum_{j=1}^{J}\beta_{l,j}\boldsymbol{a}_{\text{MS}}^H(\phi_{l}-\varphi_{l,j})\bigg)
\label{cluster-channel-model}
\end{align}
where each cluster has $IJ$ paths in total, $\theta_{l}$ and
$\phi_{l}$ represent the mean AoA/AoD associated with each
cluster, and $\vartheta_{l,i}$ and $\varphi_{l,j}$ denote the
relative AoA and AoD shift from the mean angle. Note that a
similar channel model was considered in \cite{AlkhateebHeath16},
where each cluster is assumed to contribute multiple rays/paths
between the BS and MS. In fact, the above model
(\ref{cluster-channel-model}) can be considered as a generalized
form of the channel model in \cite{AlkhateebHeath16}. On the other
hand, it can be easily observed that the above channel model can
also be expressed as a form of (\ref{channel-model}). Therefore
the compressed sensing-based channel estimation scheme
(\ref{compressed-sensing-formulation}) still applies.
Nevertheless, as to be shown in the following, the mmWave channel
with angular spreads not only exhibits sparsity patterns, it also
has a meaningful low-rank structure that can be simultaneously
utilized to reduce the sample complexity.

Similar to (\ref{virtual-channel}), we express the channel
(\ref{cluster-channel-model}) as a beam space MIMO representation
\begin{align}
\boldsymbol{H}=&\sum_{l=1}^L\boldsymbol{A}_{\text{BS}}\boldsymbol{\alpha}_l\boldsymbol{\beta}_l^T
\boldsymbol{A}_{\text{MS}}^H =
\boldsymbol{A}_{\text{BS}}\bigg(\sum_{l=1}^L\boldsymbol{\alpha}_l\boldsymbol{\beta}_l^T\bigg)\boldsymbol{A}_{\text{MS}}^H
\nonumber\\
\triangleq&
\boldsymbol{A}_{\text{BS}}\boldsymbol{H}_v\boldsymbol{A}_{\text{MS}}^H
\label{virtual-channel-2}
\end{align}
where $\boldsymbol{\alpha}_l\in\mathbb{C}^{N_1}$ and
$\boldsymbol{\beta}_l\in\mathbb{C}^{N_2}$ represent the virtual
representation over the AoA and AoD domain, respectively. Since
the angular spread occupies only a small portion of the whole
angular domain, both $\boldsymbol{\alpha}_l$ and
$\boldsymbol{\beta}_l$ are sparse vectors with only a few nonzero
entries concentrated around the mean AoA and AoD associated with
the $l$th cluster. Hence the virtual beam space channel
$\boldsymbol{H}_v$ is a sum of $L$ sparse matrices. Suppose any
sparse vector in
$\{\boldsymbol{\alpha}_l,\boldsymbol{\beta}_l\}_l$ contains at
most $p$ nonzero entries. As a result, $\boldsymbol{H}_v$ is a
sparse matrix with at most $p^2 L$ nonzero entries. Also,
$\boldsymbol{H}_v$ has at most $pL$ nonzero columns and at most
$pL$ nonzero rows. Note that due to the limited scattering nature
and small angular spreads, we usually have $pL\ll
\min\{N_1,N_2\}$. Meanwhile, $\boldsymbol{H}_v$ has a low rank
structure with $\text{rank}(\boldsymbol{H}_v)=L$. Thus the virtual
beam space channel has a simultaneously sparse and low-rank
structure.


Our objective is to estimate/recover the joint sparse and low-rank
virtual channel $\boldsymbol{H}_v$ using as few measurements as
possible. Estimation of low-rank matrices or sparse matrices from
compressed linear measurements has been studied extensively in
various settings, e.g.
\cite{CandesTao05,TroppGilbert07,CandesRecht09,RechtFazel10,KoltchinskiiLounici11}.
However, there is much less research for cases where the matrix of
interest is characterized by two structures simultaneously. In
particular, how to simultaneously exploit both structures to
improve the sample complexity is of most concern. In
\cite{BahmaniRomberg15}, an efficient two-stage scheme was
developed for recovering a sparse, rank-one and positive
semi-definite matrix in the context of compressive phase
retrieval, and it was shown that the proposed two-stage scheme can
achieve a near-optimal sample complexity and enjoys nice
robustness guarantees. In the following section, the two-stage
scheme is extended to a more general scenario where the mmWave
channel to be estimated is not necessarily a rank-one positive
semi-definite matrix. We show that an reduced sample complexity
can be obtained as compared with simply exploiting the sparsity of
the mmWave channel.

\section{Two-Stage Compressed Sensing Scheme} \label{sec:proposed-method}
Before proceeding, we revisit the measurement collection model
(\ref{data-model}) and reformulate this measurement process as a
low-rank matrix sampling process. Assume $\boldsymbol{z}(t)$ and
$\boldsymbol{f}(t)$ are randomly chosen from pre-determined
beamforming/combining codebooks $\mathcal{Z}$ and $\mathcal{F}$,
respectively, where the cardinality of the two sets are
$|\mathcal{Z}|= N_Z$ and $|\mathcal{F}|=N_F$ and no beam pair
$\{\boldsymbol{z}(t),\boldsymbol{f}(t)\}$ is reused during the
sampling process. Let
$\boldsymbol{Z}\in\mathbb{C}^{N_{\text{BS}}\times N_Z}$ and
$\boldsymbol{F}\in\mathbb{C}^{N_{\text{MS}}\times N_F}$ be
matrices constructed by all vectors in $\mathcal{Z}$ and
$\mathcal{F}$, respectively. Then the observation model
(\ref{data-model}) can be expressed as sampling from a low-rank
matrix:
\begin{align}
\boldsymbol{Y}_{ij} =
(\boldsymbol{Z}^{H}\boldsymbol{H}\boldsymbol{F})_{ij}\quad
(i,j)\in\Omega \label{data-model-2}
\end{align}
where
$\boldsymbol{Y}\triangleq\boldsymbol{Z}^{H}\boldsymbol{H}\boldsymbol{F}$
is a low rank matrix with $\text{rank}(\boldsymbol{Y})=L$,
$\boldsymbol{Y}_{ij}$ denotes the $(i,j)$th entry of
$\boldsymbol{Y}$, and $\Omega$ denotes a set indicating which
entries of $\boldsymbol{Y}$ are observed. We have $|\Omega|=T$.
Also, here the observation noise is temporarily ignored to
simplify our subsequent analysis.

Suppose $\boldsymbol{Z}$ and $\boldsymbol{F}$ are full-rank square
matrices, i.e. $N_Z=N_{\text{BS}}$ and $N_F=N_{\text{MS}}$. Then
the problem of estimating $\boldsymbol{H}$ is equivalent to a
low-rank matrix completion problem. Specifically, we first recover
the low-rank matrix $\boldsymbol{Y}$ via a nuclear-norm
minimization \cite{CandesRecht09}:
\begin{align}
\min_{\boldsymbol{\hat{Y}}} &\quad  \|\boldsymbol{\hat{Y}}\|_{\ast} \nonumber\\
\text{s.t.}& \quad \boldsymbol{\hat{Y}}_{ij} = \boldsymbol{Y}_{ij}
\quad\forall (i,j)\in\Omega
\end{align}
After recovering $\boldsymbol{Y}$, the channel $\boldsymbol{H}$
can be estimated as
\begin{align}
\hat{\boldsymbol{H}}=(\boldsymbol{Z}^{H})^{-1}\boldsymbol{\hat{Y}}\boldsymbol{F}^{-1}
\end{align}
Nevertheless, according to the matrix completion theory
\cite{CandesRecht09}, the number of measurements has to satisfy
\begin{align}
T\geq Cn^{5/4}L\log(n)
\end{align}
in order to stably reconstruct $\boldsymbol{Y}$ of rank at most
$L$ with probability at least $1-c n^{-3}$, where
$n=\max\{N_\text{BS},N_\text{MS}\}$, and the constants $C,c>0$ are
universal. Hence for the low-rank matrix completion approach, the
required number of measurements is of order
$\mathcal{O}(L\max\{N_\text{BS},N_\text{MS}\}^{5/4})$, which
increases approximately linearly with the number of antennas
employed at the BS or MS, whichever is greater. We see that the
low-rank matrix completion scheme ignores the sparse structure
inherent in mmWave channels, and thus can only achieve a
sub-optimal sample complexity. To obtain a lower sample
complexity, we introduce the following two-stage compressed
sensing scheme.

\subsection{Proposed Scheme}
The idea of the proposed two-stage scheme is to exploit the low
rank and sparse structures in two separate stages. In the first
stage, we utilize the low rank structure to recover
$\boldsymbol{Y}$ from observations $\{\boldsymbol{Y}_{i,j},
(i,j)\in\Omega\}$. Note that $\boldsymbol{Z}$ and $\boldsymbol{F}$
do not need to be full-rank; instead, in order to achieve a lower
sample complexity, they should have reduced dimensions, i.e. $N_Z<
N_{\text{BS}}$ and $N_F< N_{\text{MS}}$. In other words, the size
of $\boldsymbol{Y}$ is much smaller than the size of
$\boldsymbol{H}$. In the second stage, based on the reconstructed
$\boldsymbol{Y}$, we estimate the virtual beam space channel
$\boldsymbol{H}_v$ by exploiting the sparse structure of
$\boldsymbol{H}_v$. Through this two-stage scheme, the low-rank
and sparse structures of the channel matrix $\boldsymbol{H}_v$ can
be effectively decoupled and thus better utilized. For clarity, we
summarize the two-stage scheme in Algorithm \ref{algorithm1}.

\begin{algorithm}
\caption{Two-Stage Compressed Sensing Algorithm}
\begin{algorithmic}
\STATE {Given the measurements $\boldsymbol{Y}_\Omega$, and the
matrices
$\boldsymbol{A}\triangleq\boldsymbol{Z}^H\boldsymbol{A}_{\text{BS}}$,
$\boldsymbol{B}\triangleq\boldsymbol{A}_{\text{MS}}^H\boldsymbol{F}$.
\begin{description}
  \item[1]
    Recover $\hat{\boldsymbol{Y}}$ by solving
    \begin{align}
    &\min_{\boldsymbol{\hat{Y}}} \|\boldsymbol{\hat{Y}}\|_* \nonumber\\
    &\text{s.t.} \quad \boldsymbol{\hat{Y}}_{ij} = \boldsymbol{Y}_{ij}
\quad\forall (i,j)\in\Omega
    \label{stage-1}
    \end{align}
  \item[2]
    Estimate $\hat{\boldsymbol{H}}_v$ via
    \begin{align}
    &\min_{\boldsymbol{H}_v} \|\boldsymbol{H}_v\|_1 \nonumber\\
    &\text{s.t.} \quad \hat{\boldsymbol{Y}} = \boldsymbol{A}^{H}
    \boldsymbol{H}_v\boldsymbol{B}
    \label{stage-2}
    \end{align}
\end{description}
}
\end{algorithmic}
\label{algorithm1}
\end{algorithm}

\subsection{Theoretical Results}
We now provide theoretical guarantees for our proposed two-stage
compressed sensing scheme. Our main results are summarized as
follows.

\newtheorem{theorem}{Theorem}
\begin{theorem} \label{theorem1}
Consider the channel estimation problem described in
(\ref{data-model-2}), where the indexes in $\Omega$ are uniformly
chosen at random with $|\Omega|= T$. The channel matrix
$\boldsymbol{H}$ can be represented in a form of
(\ref{virtual-channel-2}). Let $L$ denote the rank of
$\boldsymbol{H}$, and $p$ denote the maximum number of nonzero
entries in $\{\boldsymbol{\alpha}_l,\boldsymbol{\beta}_l\}_l$.
Suppose $\boldsymbol{A}\in \mathbb{C}^{N_Z\times N_1}$ and
$\boldsymbol{B}\in \mathbb{C}^{N_F\times N_2}$ are random matrices
with i.i.d. Gaussian random entries $a_{i,j}\sim
\mathcal{N}(0,\frac{1}{N_Z})$ and $b_{i,j}\sim
\mathcal{N}(0,\frac{1}{N_F})$\footnote{See discussions in Section
\ref{sec:proposed-method}.C regarding this assumption.}. Define
$n\triangleq\max\{N_F,N_Z\}$. There exist positive absolute
constants $c_1$, $c_2$, $c_3$, $c_4$, $c_5$ and $c_6$ such that if
\begin{align}
N_Z \geq & c_1 pL\log(N_\text{BS}/pL) \label{condition-1}\\
N_F \geq & c_2 pL\log(N_\text{MS}/pL) \label{condition-2}\\
T\geq & c_3 n^{5/4}L\log(n) \label{condition-3}
\end{align}
then the channel $\boldsymbol{H}$ can be perfectly recovered from
Algorithm \ref{algorithm1} with probability exceeding $(1-c_4
n^{-3})(1-2e^{-c_5N_Z})(1-2e^{-c_6N_F})$.
\end{theorem}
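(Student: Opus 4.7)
The plan is to decouple the two stages of Algorithm~\ref{algorithm1}. Since the observation model~(\ref{data-model-2}) is noiseless, whenever Stage~1 succeeds it returns $\hat{\boldsymbol{Y}}=\boldsymbol{Y}$ exactly, after which Stage~2 reduces to a deterministic sparse inverse problem whose sensing operator is built from the two independent Gaussian matrices $\boldsymbol{A}$ and $\boldsymbol{B}$. A union bound over the failure events of the two stages then yields the product-form probability $(1-c_4 n^{-3})(1-2e^{-c_5 N_Z})(1-2e^{-c_6 N_F})$ claimed in the statement.

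For Stage~1 I would apply the Cand\`es--Recht nuclear-norm matrix completion theorem to $\boldsymbol{Y}=\boldsymbol{Z}^{H}\boldsymbol{H}\boldsymbol{F}$, which inherits rank $L$ from $\boldsymbol{H}$ and has maximum dimension $n=\max\{N_Z,N_F\}$. The theorem guarantees exact recovery with probability at least $1-c_4 n^{-3}$ provided $T\geq c_3 n^{5/4}L\log n$, matching condition~(\ref{condition-3}). The only technical check is that $\boldsymbol{Y}$ is sufficiently incoherent; this follows because the Gaussian matrices $\boldsymbol{A}$ and $\boldsymbol{B}$ (which determine the left and right singular subspaces of $\boldsymbol{Y}$) are rotationally invariant and therefore produce subspaces that are incoherent up to logarithmic factors with overwhelming probability.

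For Stage~2 the central structural observation (from Section~\ref{sec:channel-model}) is that $\boldsymbol{H}_v=\sum_{l=1}^{L}\boldsymbol{\alpha}_l\boldsymbol{\beta}_l^{T}$ with each factor $p$-sparse, so $\boldsymbol{H}_v$ has at most $pL$ nonzero rows and at most $pL$ nonzero columns. For any fixed pair of supports $R\subseteq[N_1]$, $C\subseteq[N_2]$ with $|R|,|C|\leq pL$, the restriction of the sensing map $\boldsymbol{M}\mapsto\boldsymbol{A}\boldsymbol{M}\boldsymbol{B}$ to matrices supported on $R\times C$ is the Kronecker operator built from the Gaussian sub-matrices $\boldsymbol{A}_R$ and $\boldsymbol{B}_C$. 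Gordon-type tail inequalities, combined with a union bound over the $\binom{N_1}{pL}\binom{N_2}{pL}$ candidate supports via $\log\binom{N}{s}\leq s\log(eN/s)$, show that $\boldsymbol{A}$ and $\boldsymbol{B}$ each satisfy the RIP on $pL$-sparse vectors with probabilities at least $1-2e^{-c_5 N_Z}$ and $1-2e^{-c_6 N_F}$ whenever conditions~(\ref{condition-1}) and~(\ref{condition-2}) hold. A dual-certificate construction in the spirit of~\cite{BahmaniRomberg15} then promotes these two one-sided RIPs to exact $\ell_1$-recovery of $\boldsymbol{H}_v$.

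The main obstacle is this last implication. If one vectorises $\boldsymbol{H}_v$ and applies standard $\ell_1$-recovery theory directly to the Kronecker sensing matrix $\boldsymbol{B}^{T}\otimes\boldsymbol{A}$, one is forced to ask for RIP over arbitrary $p^{2}L$-sparse vectors in $\mathbb{C}^{N_1 N_2}$, which is strictly stronger than the two one-sided conditions we want and is known to behave badly for Kronecker ensembles. The key insight that rescues the argument is that $\boldsymbol{H}_v$ in fact lives in the much smaller doubly-sparse cone (row \emph{and} column sparse), on which the Kronecker operator factorises into two independent one-sided Gaussian sparse-recovery problems; this factorisation is precisely what produces the decoupled dimension requirements~(\ref{condition-1}) on $N_Z$ and~(\ref{condition-2}) on $N_F$ instead of a single joint requirement on the product $N_Z N_F$, and is the mechanism by which the two-stage scheme beats the direct compressed-sensing baseline.
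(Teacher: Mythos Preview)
Your two-stage decoupling matches the paper's proof exactly: Stage~1 via Cand\`es--Recht matrix completion under~(\ref{condition-3}) for the $1-c_4 n^{-3}$ factor, and Stage~2 via separate vector RIPs on $\boldsymbol{A}$ and $\boldsymbol{B}$ under~(\ref{condition-1})--(\ref{condition-2}) for the two exponential factors. The difference is in how Stage~2 is closed. The paper does \emph{not} build a dual certificate; it proves a self-contained Lemma~\ref{lemma1} that first lifts the two one-sided vector RIPs to a two-sided ``matrix RIP''
\[
(1-\delta_{2k})^2\|\boldsymbol{\Phi}\|_F^2 \;\leq\; \|\boldsymbol{A}\boldsymbol{\Phi}\boldsymbol{B}^H\|_F^2 \;\leq\; (1+\delta_{2k})^2\|\boldsymbol{\Phi}\|_F^2
\]
for every $\boldsymbol{\Phi}$ with at most $2k$ nonzero rows and $2k$ nonzero columns (apply the RIP of $\boldsymbol{A}$ column-wise, then that of $\boldsymbol{B}$ row-wise), and then runs a direct Cand\`es--Tao style decomposition of the error $\hat{\boldsymbol{X}}-\boldsymbol{X}$ into $k\times k$-supported pieces $\boldsymbol{E}_0,\boldsymbol{E}_1,\ldots$, concluding uniqueness for the $\ell_1$ program whenever $\delta_{2k}<1+\sqrt{2}\,(1-\sqrt{1+\sqrt{2}})\approx 0.216$. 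This is exactly the ``factorisation into two independent one-sided problems'' you identify in your last paragraph, made concrete via an RIP argument rather than a dual certificate; it is more elementary and yields an explicit RIC threshold, whereas your route would require constructing an approximate subgradient of $\|\cdot\|_1$ in the range of the Kronecker adjoint, which is doable but less transparent here. You are also more careful than the paper about the incoherence hypothesis in Stage~1: the paper simply invokes~\cite{CandesRecht09} without verifying it for $\boldsymbol{Y}$.
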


\begin{proof}
Our proof proceeds in two steps. We first investigate the
condition under which $\boldsymbol{Y}$ can be perfectly recovered
from (\ref{stage-1}), and then examine the exact recovery
condition for (\ref{stage-2}). By combining the results of the two
stages, we arrive at results in Theorem \ref{theorem1}.

Since $\boldsymbol{Y}$ has a low rank structure, the first stage
is essentially a matrix completion stage. Invoking the matrix
completion theory \cite{CandesRecht09}, we know that for some
positive constants $c_3$ and $c_4$, if (\ref{condition-3}) holds,
then $\boldsymbol{Y}$ can be perfectly recovered with probability
exceeding $1-c_4n^{-3}$.

The second stage is a sparse matrix recovery stage. Note that
$\boldsymbol{H}_v$ is a sparse matrix with at most $pL$ nonzero
columns and rows. We have the following theoretical guarantee for
recovering a sparse matrix $\boldsymbol{X}$ from compressed linear
measurements
$\boldsymbol{G}=\boldsymbol{A}\boldsymbol{X}\boldsymbol{B}$.
\newtheorem{lemma}{Lemma}
\begin{lemma}
Let $\boldsymbol{X}\in \mathbb{C}^{N_1\times N_2}$ denote a sparse
matrix with at most $k$ nonzero columns and rows.
$\boldsymbol{A}\in \mathbb{C}^{N_A\times N_1}$ and
$\boldsymbol{B}\in \mathbb{C}^{N_B\times N_2}$ satisfy the
$2k$-restricted isometry property with $\delta_{2k}$, namely,
\begin{align}
(1-\delta_{2k})\left\|\boldsymbol{x}\right\|_2^2 \leq
\left\|\boldsymbol{A}\boldsymbol{x}\right\|_2^2
\leq (1+\delta_{2k})\left\|\boldsymbol{x}\right\|_2^2 \nonumber\\
(1-\delta_{2k})\left\|\boldsymbol{x}\right\|_2^2 \leq
\left\|\boldsymbol{B}\boldsymbol{x}\right\|_2^2 \leq
(1+\delta_{2k})\left\|\boldsymbol{x}\right\|_2^2 \nonumber
\end{align}
for all $2k$-sparse vectors $\boldsymbol{x}$, where $\delta_{2k}
\triangleq\max\{\delta_{2k}(\boldsymbol{A}),\delta_{2k}(\boldsymbol{B})\}$,
with $\delta_{2k}(\boldsymbol{A})$ and
$\delta_{2k}(\boldsymbol{B})$ denoting the restricted isometry
constants (RIC) of $\boldsymbol{A}$ and $\boldsymbol{B}$
respectively. If the following condition holds
\begin{align}
\delta_{2k}<1+\sqrt{2}\left(1-\sqrt{1+\sqrt{2}}\right)\approx
0.216
\end{align}
then $\boldsymbol{X}$ can be exactly recovered via
\begin{align}
\min_{\boldsymbol{\hat{X}}} &\quad \|\boldsymbol{\hat{X}}\|_1 \nonumber\\
\text{s.t.} &\quad \boldsymbol{G} =
\boldsymbol{A}\boldsymbol{\hat{X}}\boldsymbol{B}^H \label{opt1}
\end{align}
\label{lemma1}
\end{lemma}
\begin{proof}
See Appendix \ref{appA}.
\end{proof}

Meanwhile, it is well-known that for a random matrix
$\boldsymbol{\Psi}\in\mathbb{R}^{m\times n}$ whose i.i.d. entries
follow a Gaussian distribution with zero mean and variance $1/m$,
if the following condition
\begin{align}
m\geq \eta k\log (n/k)
\end{align}
holds for a sufficiently large constant $\eta>0$, then
$\boldsymbol{\Psi}$ satisfies the $2k$-restricted isometry
property for a sufficiently small restricted isometry constant
$\delta_{2k}(\boldsymbol{\Psi})$ with probability exceeding
$1-2e^{-cm}$ for some constant $c>0$ that depends only on
$\delta_{2k}(\boldsymbol{\Psi})$ \cite{BaraniukDavenport08}.
Recalling Lemma \ref{lemma1}, we therefore can naturally arrive at
the following: for some positive constants $c_1$, $c_2$, $c_5$ and
$c_6$, if (\ref{condition-1}) and (\ref{condition-2}) hold valid,
then $\boldsymbol{H}_v$ can be perfectly recovered via
(\ref{stage-2}) with probability exceeding
$(1-2e^{-c_5N_Z})(1-2e^{-c_6N_F})$.

By combining the results from both stages, we now reach that there
exist positive absolute constants $c_1$, $c_2$, $c_3$, $c_4$,
$c_5$ and $c_6$ such that if
(\ref{condition-1})--(\ref{condition-3}) are satisfied, then the
channel $\boldsymbol{H}$ can be perfectly recovered from Algorithm
\ref{algorithm1} with probability exceeding $(1-c_4
n^{-3})(1-2e^{-c_5N_Z})(1-2e^{-c_6N_F})$. The proof is completed
here.
\end{proof}




\subsection{Discussions}
From Theorem \ref{theorem1}, we see that the number of
measurements $T$ required for exact channel recovery is of order
\begin{align}
\mathcal{O}(p^{5/4}L^{9/4}\log(n))\approx\mathcal{O}(pL^{2})
\label{sample-complexity-two-stage}
\end{align}
which scales approximately linearly with $p$ and quadratically
with the rank $L$. Since $p$ and $L$ are usually much smaller than
$\max\{N_{\text{BS}},N_{\text{MS}}\}$, our proposed two-stage
scheme can achieve substantial overhead reduction as compared with
the low rank matrix completion scheme whose required number of
measurements scales linearly with
$\max\{N_{\text{BS}},N_{\text{MS}}\}$.

It is also interesting to compare our proposed two-stage scheme
with a compressed sensing method which solves (\ref{data-model-2})
by directly formulating (\ref{data-model-2}) into a sparse
recovery problem (\ref{compressed-sensing-formulation}). Note that
$\boldsymbol{h}=\text{vec}(\boldsymbol{H}_v)$ has at most $p^2L$
nonzero entries. According to the compressed sensing theory
\cite{CandesTao05}, we know that the probability of successful
recovery of $\boldsymbol{h}$ via
(\ref{compressed-sensing-formulation}) exceeds $1-\delta$ if
\begin{align}
T\geq Cp^2L\log (N_{1}N_{2}/\delta)
\end{align}
in which $C$ is a positive constant. Thus the number of
measurements required for exact channel recovery is of order
\begin{align}
\mathcal{O}(p^2 L) \label{sample-complexity-CS}
\end{align}
for the direct compressed sensing method. Comparing
(\ref{sample-complexity-two-stage}) with
(\ref{sample-complexity-CS}), we can see that our proposed
two-stage scheme achieves a lower sample complexity than the
direct compressed sensing method if $L<p$. Note that $L$
represents the number of scattering clusters, and $p$, the largest
number of nonzero entries in
$\{\boldsymbol{\alpha}_l,\boldsymbol{\beta}_l\}$, is a quantity
that measures the maximum angular spread among all scattering
clusters. Due to the limited scattering characteristics in mmWave
channels, we usually have $L<p$ in practice. In particular, for
the extreme case where there is only a line-of-sight (LOS) path
between the transmitter and the receiver, $L$ is equal to one,
whereas $p$ is generally greater than one since there still exists
angular spread (power leakage) due to limited spatial resolution.


In Theorem \ref{theorem1}, we assume that
$\boldsymbol{A}\triangleq\boldsymbol{Z}^H\boldsymbol{A}_{\text{BS}}$
and
$\boldsymbol{B}\triangleq\boldsymbol{A}_{\text{MS}}^H\boldsymbol{F}$
are random matrices with i.i.d. Gaussian random entries.
Nevertheless, noticing that $\boldsymbol{A}_{\text{BS}}$ and
$\boldsymbol{A}_{\text{MS}}$ are structured matrices consisting of
array response vectors, it may not be possible to devise
beamforming and combining matrices
$\{\boldsymbol{Z},\boldsymbol{F}\}$ such that the resulting
$\boldsymbol{A}$ and $\boldsymbol{B}$ satisfy the i.i.d. Gaussian
assumption. We, however, still make such an assumption in order to
facilitate our theoretical analysis. On the other hand, recent
theoretical and empirical studies \cite{DuarteEldar11} show that
structured matrices also enjoy nice restricted isometry
properties. Note that the same problem exists for the direct
compressed sensing method, where the sensing matrix is highly
structured but a random sensing matrix assumption is evoked in
order to obtain its sample complexity.

\subsection{Extension To The Noisy Case}
In the previous subsections, we ignore the observation noise in
order to simplify our theoretical analysis. Nevertheless, the
two-stage compressed sensing scheme can be easily adapted to the
noisy case. For clarity, the two-stage algorithm for the noisy
case is summarized as follows.

\begin{algorithm}
\caption{Robust Two-Stage Compressed Sensing Algorithm}
\begin{algorithmic}
\STATE {Given the measurements $\boldsymbol{Y}_\Omega$, the
matrices
$\boldsymbol{A}\triangleq\boldsymbol{Z}^H\boldsymbol{A}_{\text{BS}}$,
$\boldsymbol{B}\triangleq\boldsymbol{A}_{\text{MS}}^H\boldsymbol{F}$.
\begin{description}
  \item[1]
    Recover $\hat{\boldsymbol{Y}}$ by solving
    \begin{align}
    &\text{min}\ \|\boldsymbol{\hat{Y}}\|_* \nonumber\\
    &\text{s.t.} \quad \|\hat{\boldsymbol{Y}}_\Omega - \boldsymbol{Y}_\Omega\|_F < \varepsilon
    \label{stage-1-noisy}
    \end{align}
  \item[2]
    Estimate $\hat{\boldsymbol{H}}_v$ via
    \begin{align}
    &\text{min}\ \|\boldsymbol{H}_v\|_1 \nonumber\\
    &\text{s.t.} \quad \|\hat{\boldsymbol{Y}} - \boldsymbol{A}^{H}
    \boldsymbol{H}_v\boldsymbol{B}\|_F < \epsilon
    \label{stage-2-noisy}
    \end{align}
\end{description}
}
\end{algorithmic}
\label{algorithm2}
\end{algorithm}

In Algorithm \ref{algorithm2}, $\varepsilon$ and $\epsilon$ are
error tolerance parameters. Also, the constrained optimizations
(\ref{stage-1-noisy}) and (\ref{stage-2-noisy}) can be converted
to unconstrained optimization problems by introducing an
appropriate choice of the regularization parameter $\lambda$. For
example, (\ref{stage-1-noisy}) can be replaced by
\begin{align}
\min_{\boldsymbol{\hat{Y}}}\quad \|\hat{\boldsymbol{Y}}_\Omega -
\boldsymbol{Y}_\Omega\|_F^2+\lambda \|\boldsymbol{\hat{Y}}\|_{*}
\end{align}
which can be efficiently solved by the fixed point continuation
algorithm \cite{MaGoldfarb11}.

\begin{figure*}[!t]
\centering
 \subfigure[Success rates vs. $T$.]{\includegraphics[width=3.5in]{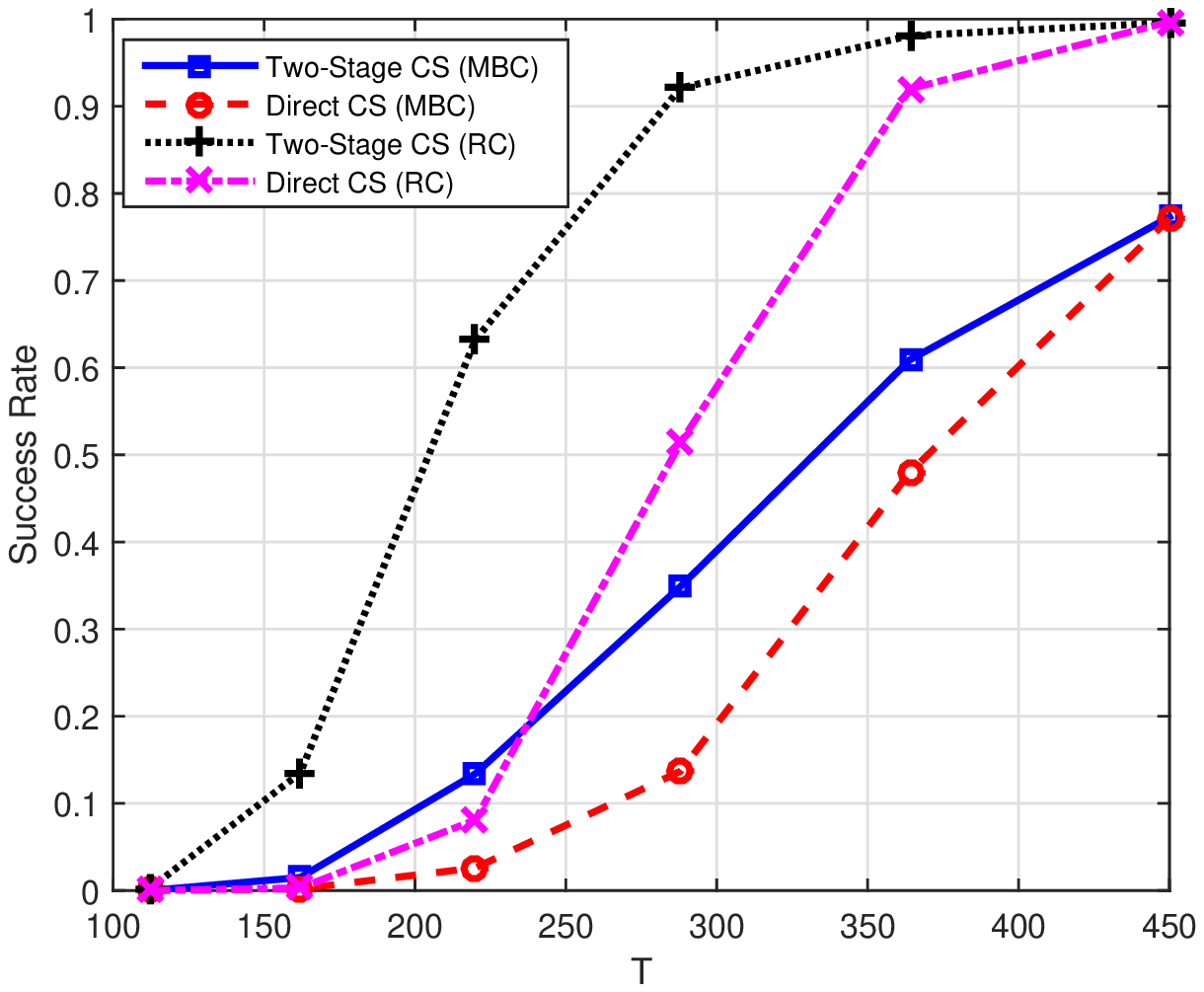}} \hfil
\subfigure[NMSEs vs. $T$.]{\includegraphics[width=3.5in]{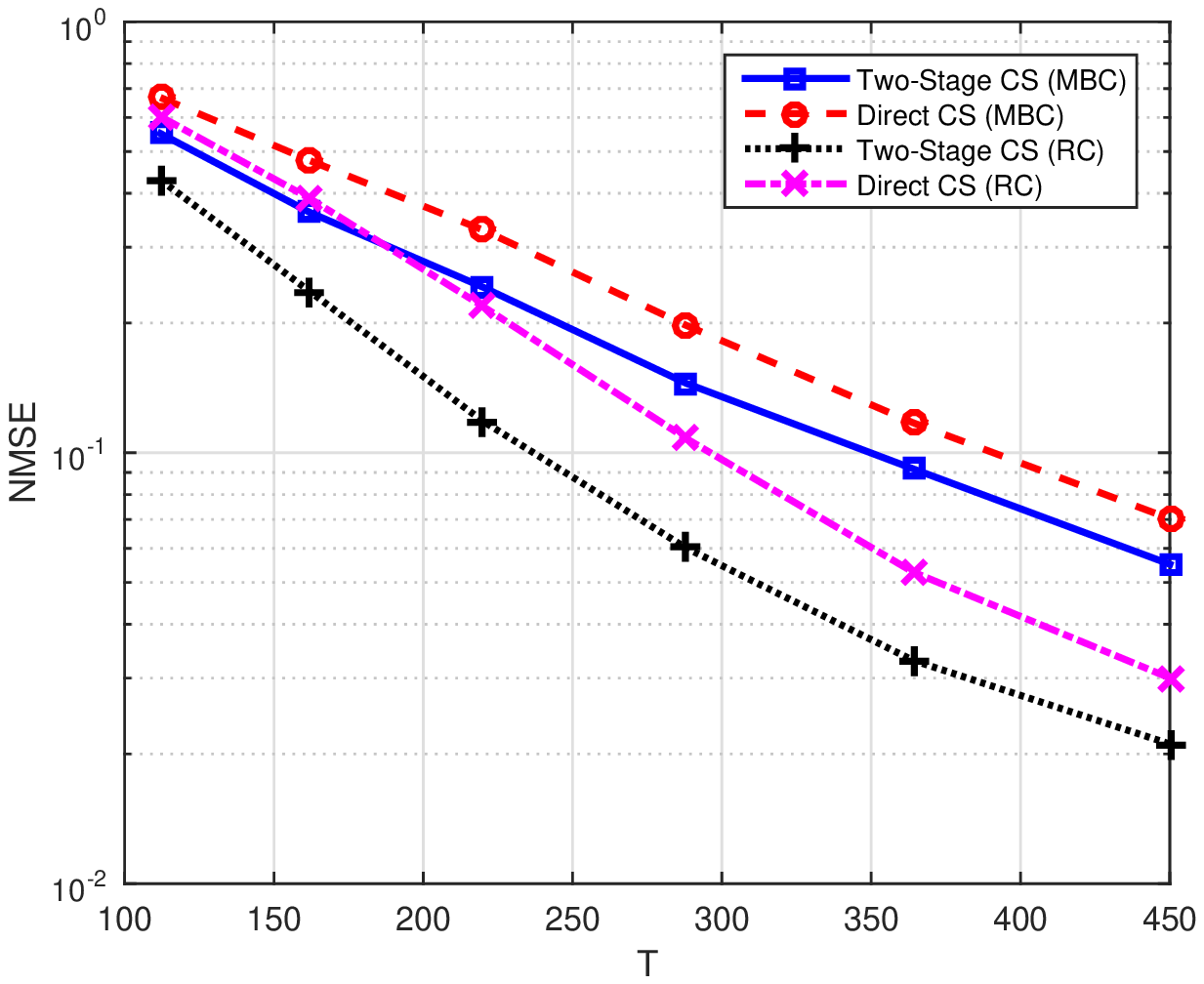}}
\caption{Success rates and NMSEs of respective algorithms vs.
$T$.} \label{fig2}
\end{figure*}

\begin{figure*}[!t]
\centering
 \subfigure[Success rates vs. the angular spread.]{\includegraphics[width=3.5in]{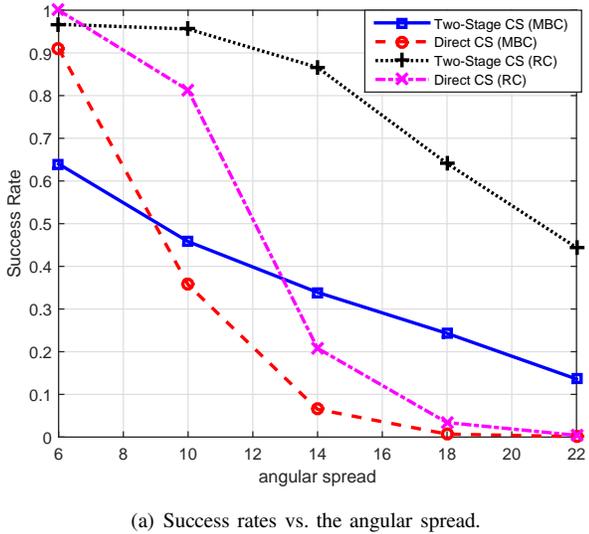}} \hfil
\subfigure[NMSEs vs. the angular
spread.]{\includegraphics[width=3.5in]{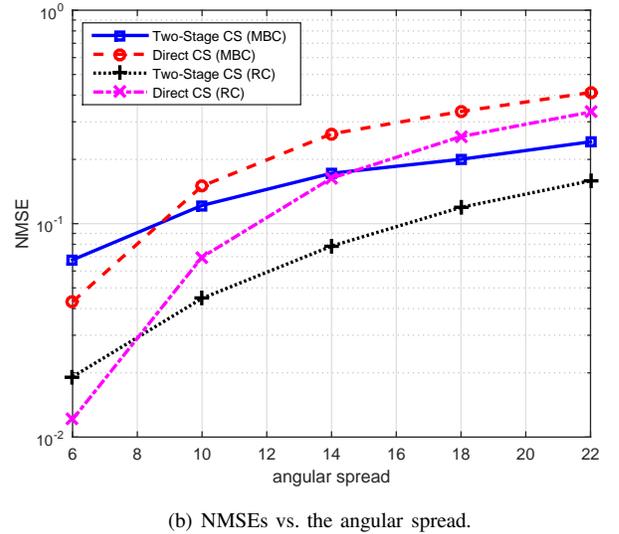}}
\caption{Success rates and NMSEs of respective algorithms vs. the
angular spread.} \label{fig3}
\end{figure*}

\section{Simulation Results} \label{sec:experiments}
We now carry out simulation results to illustrate the performance
of our proposed two-stage compressed sensing (referred to as
two-stage CS) method and its comparison with the direct compressed
sensing (referred to as direct-CS) method. For our proposed
method, we use the singular value thresholding (SVT) algorithm
\cite{CaiCandes10} and the fixed point continuation (FPC)
algorithm \cite{MaGoldfarb11} to solve the matrix completion
problem for the noiseless and noisy case, respectively. A fast
iterative shrinkage-thresholding algorithm (FISTA)
\cite{BeckTeboulle09} is employed to perform the sparse recovery
stage and to solve the direct CS method.


We consider a scenario where both the BS and the MS employ a
uniform linear array with $N_{\text{BS}}=N_{\text{MS}}=64$
antennas. The distance between neighboring antenna elements is
assumed to be half the wavelength of the signal. The mmWave
channel is assumed to follow the geometric channel model
(\ref{cluster-channel-model}) with $L=2$ clusters. The mean
AoAs/AoDs for these two clusters are set to
$\theta_1=\phi_1=\pi/6$, $\theta_2=\phi_2=-\pi/6$, respectively.
The number of rays within each cluster is set to $IJ=100$. Unless
otherwise specified, the AoA and AoD angular spreads for each
cluster are set to $\delta_{\theta}=15^\circ$ and
$\delta_{\phi}=10^\circ$. The relative AoA/AoD shifts are
uniformly generated within the angular spreads, i.e.
$\vartheta_{l,i}\in
(\theta_l-\delta_{\theta}/2,\theta_l+\delta_{\theta}/2)$,
$\varphi_{l,i}\in
(\phi_l-\delta_{\phi}/2,\phi_l+\delta_{\phi}/2)$. The complex
gains $\{\alpha_{l,i}\beta_{l,j}\}$ are assumed to be random
variables following a circularly symmetric complex Gaussian
distribution $\mathcal{CN}(0,1/\rho)$, where $\rho$ is given by
$\rho=(4\pi Df_c/c)^2$. Here $c$ represents the speed of light,
$D$ denotes the distance between the BS and the MS, $f_c$ is the
carrier frequency, and we set $D=30\text{m}$ and
$f_c=28\text{GHz}$. The performance is evaluated via two metrics,
namely, the normalized mean squared error (NMSE) and the success
rate. The NMSE is calculated as
\begin{align}
\text{NMSE}=E\left[\frac{\left\|\hat{\boldsymbol{H}}
-\boldsymbol{H}\right\|^2_F}{\left\|\boldsymbol{H}\right\|^2_F}\right]
\end{align}
where $\boldsymbol{\hat{H}}$ denotes the estimate of the true
channel $\boldsymbol{H}$. The success rate is computed as the
ratio of the number of successful trials to the total number of
independent runs. A trial is considered successful if the
normalized reconstruction error is no greater than $10^{-2}$.

In our experiments, the beamforming/combining codebooks, i.e.
$\boldsymbol{F}$ and $\boldsymbol{Z}$, are generated according to
two different ways. The first is to have the entries of
$\boldsymbol{F}$ and $\boldsymbol{Z}$ uniformly chosen from a unit
circle, in which case the antenna array has a
quasi-omnidirectional beam pattern. This scheme is referred to as
a random coding (RC) scheme. Another scheme of devising
$\boldsymbol{F}$ and $\boldsymbol{Z}$ is to steer the antenna
array to beam along multiple directions simultaneously, which is
achieved by dividing the antenna array into a number of sub-arrays
and making each sub-array beam toward an individual direction
\cite{AbariHassanieh16}. The steering directions are randomized
for each measurement. This scheme is named as multiple-beam coding
(MBC) scheme. In order to provide a fair comparison, the columns
of $\boldsymbol{F}$ and $\boldsymbol{Z}$ are normalized to unit
norm for both beam pattern design schemes. We assume that, at each
time instant, the beamforming vector $\boldsymbol{f}(t)$ and the
combining vector $\boldsymbol{z}(t)$ are randomly chosen from the
beamforming/combining codebooks, respectively. Hence the
measurement process can be deemed as randomly collecting samples
from a low-rank matrix
$\boldsymbol{Y}=\boldsymbol{Z}^H\boldsymbol{H}\boldsymbol{F}$ (cf.
(\ref{data-model-2})), where $\boldsymbol{Y}$ is an $N_Z\times
N_F$ matrix. For simplicity, we assume $N_Z=N_F$. Also, in our
experiments, the value of $N_Z$ ($N_F$) is adaptively adjusted
such that the ratio of the number of observed entries $T$ to the
total number of entries in $\boldsymbol{Y}$ is fixed to be $1/2$,
i.e. $T=(1/2)N_{Z} N_{F}$. Such a setup can provide a reliable
matrix completion result, which in turn helps achieve an accurate
channel estimate for our proposed two-stage method. The adaptive
adjustment of the dimensions of the codebooks can be easily
implemented in practice. We can first generate augmented
beamforming/combining codebooks and then choose $\boldsymbol{Z}$
and $\boldsymbol{F}$ as subsets (with variable dimensions) of the
augmented codebooks.

We now examine the estimation performance of our proposed
two-stage CS method and the direct CS method. Fig. \ref{fig2}
plots the success rates for the noiseless case and NMSEs for the
noisy case as a function of the number of measurements $T$, where
for the noisy case, the SNR, defined as
$10\log(\|\boldsymbol{H}\|_F^2/(N_{\text{BS}}N_{\text{MS}}\sigma^2))$,
is set equal to 20dB. From Fig. \ref{fig2}, we see that better
performance can be obtained by using the beamforming/combining
codebooks that are generated according to the RC scheme. Also, our
proposed two-stage CS method presents a clear performance
advantage over the direct CS algorithm, whichever
beamforming/combining codebooks are used. This result corroborates
our claim that the proposed two-stage CS method can achieve a
lower sample complexity than the direct CS method.


Next, in Fig. \ref{fig3}, we examine the performance of respective
algorithms as a function of the angular spread, where the AoA and
AoD angular spreads are assumed to be the same and vary from
$6^\circ$ to $22^\circ$, i.e. $\delta_{\theta}=\delta_{\phi}\in
[6^\circ, 22^\circ]$. Also, we set $N_Z=N_F=24$, $T=0.5N_Z N_F$,
and the SNR is set to 20dB for the noisy case. From Fig.
\ref{fig3}, we see that the direct CS method outperforms our
proposed two-stage scheme when the angular spread is small, say,
$\delta_{\theta}=\delta_{\phi}=6^\circ$, whereas our proposed
method achieves a performance improvement over the direct CS as
the angular spread becomes large. This result, again,
substantiates our theoretical analysis. As indicated earlier in
our paper, our proposed two-stage scheme achieves a lower sample
complexity only when $L<p$, where $L$ represents the number of
scattering clusters, and $p$ is a value related to the angular
spread. When the angular spread is small, the condition $L<p$ may
not hold. As a result, the proposed two-stage CS method does not
necessarily perform better than the direct CS method. Lastly, in
Fig. \ref{fig4}, we depict the NMSEs of respective algorithms vs.
the SNR, where we set $N_Z=N_F=24$, $T=0.5N_ZN_F$,
$\delta_{\theta}=15^\circ$ and $\delta_{\phi}=10^\circ$. We see
that the proposed two-stage CS method outperforms the direct CS
method in moderate and high SNR regimes.


\begin{figure}[!t]
\centering
\includegraphics[width=3.5in]{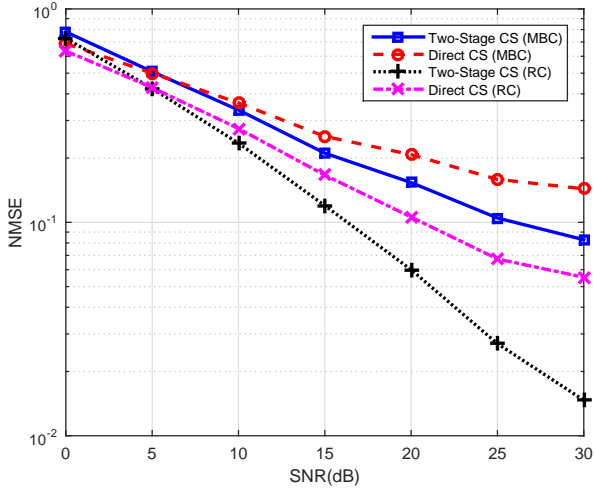}
\caption{NMSEs of respective algorithms vs. SNR.} \label{fig4}
\end{figure}

\section{Conclusions} \label{sec:conclusion}
We studied the problem of channel estimation for mmWave systems
with only one RF chain used at the BS and MS. Besides the sparse
scattering characteristics, we also considered the effect of
angular spreads in channel modeling and algorithm development. We
showed that, in the presence of angular spreads, mmWave channels
exhibit a jointly sparse and low-rank structure. A two-stage
compressed sensing method was developed, in which a matrix
completion stage is first performed, and then followed by a sparse
recovery stage to estimate the mmWave channel. Theoretical
analysis was also conducted. It reveals that the proposed
two-stage method requires fewer measurements than a direct
compressed sensing method that exploits the sparsity but ignore
the low-rank structure of mmWave channels. Simulation results were
provided to corroborate our theoretical analysis and demonstrate
the superiority of the proposed two-stage compressed sensing
method.

\useRomanappendicesfalse
\appendices

\section{Proof of Lemma \ref{lemma1}} \label{appA}
Before proving $\hat{\boldsymbol{X}}=\boldsymbol{X}$, we first
show that for any sparse matrix
$\boldsymbol{\Phi}\in\mathbb{C}^{N_1\times N_2}$ with at most $2k$
nonzero columns and rows, we have
\begin{align}
(1-\delta_{2k})^2\left\|\boldsymbol{\Phi}\right\|_F^2 \leq
\left\|\boldsymbol{A}\boldsymbol{\Phi}\boldsymbol{B}^H\right\|_F^2
\leq (1+\delta_{2k})^2\left\|\boldsymbol{\Phi}\right\|_F^2
\label{lemma1:RIP}
\end{align}
Since $\boldsymbol{A}$ satisfies the $2k$-RIP and each column of
$\boldsymbol{\Phi}$ is a $2k$-sparse vector, adding all the
inequalities together leads to
\begin{align}
(1-\delta_{2k})\left\|\boldsymbol{\Phi}\right\|_F^2 \leq
\left\|\boldsymbol{A}\boldsymbol{\Phi}\right\|_F^2 \leq
(1+\delta_{2k})\left\|\boldsymbol{\Phi}\right\|_F^2
\end{align}
Meanwhile, note that $\boldsymbol{\Phi}^H\boldsymbol{A}^H$ has at
most $2k$ non-zero rows, i.e. each column of
$\boldsymbol{\Phi}^H\boldsymbol{A}^H$ is also a $2k$-sparse
vector. Using the RIP associated with $\boldsymbol{B}$, we have
\begin{align}
\left\|\boldsymbol{B}\boldsymbol{\Phi}^H\boldsymbol{A}^H\right\|_F^2
\leq (1+\delta_{2k})
\left\|\boldsymbol{\Phi}^H\boldsymbol{A}^H\right\|_F^2 \leq (1+\delta_{2k})^2\left\|\boldsymbol{\Phi}\right\|_F^2
\label{lemma1-eqn1}\\
\left\|\boldsymbol{B}\boldsymbol{\Phi}^H\boldsymbol{A}^H\right\|_F^2
\geq
(1-\delta_{2k})\left\|\boldsymbol{\Phi}^H\boldsymbol{A}^H\right\|_F^2
\geq (1-\delta_{2k})^2\left\|\boldsymbol{\Phi}\right\|_F^2
\label{lemma1-eqn2}
\end{align}
Combining (\ref{lemma1-eqn1})--(\ref{lemma1-eqn2}), we arrive at
(\ref{lemma1:RIP}).


Using (\ref{lemma1:RIP}), we now prove that $\boldsymbol{E}
\triangleq \hat{\boldsymbol{X}}-\boldsymbol{X}$ equals to zero,
i.e. $\|\boldsymbol{E}\|_F=0$. Let $\Omega$ denotes the support
set (i.e. the set of indices of non-zeros entries) of
$\boldsymbol{X}$. $\boldsymbol{E}$ can be decomposed as
\begin{align}
\boldsymbol{E} = \sum_{i=0}^N\boldsymbol{E}_i
\end{align}
where $\boldsymbol{E}_0$ is a matrix whose entries in the set
$\Omega$ are equivalent to those of $\boldsymbol{E}$, while the
rest of entries are equal to zero, $\boldsymbol{E}_i$ ($i\neq0$)
have disjoint support sets with size $k\times k$ such that
$(1/k^2)\|\boldsymbol{E}_i\|_1\geq
\|\boldsymbol{E}_{i+1}\|_{\infty}$ for $i=1,\ldots,N-1$. Note that
this inequality can be automatically satisfied if we arrange the
entries of $\boldsymbol{E}$ in descending order and ensure that
the largest (in terms of magnitude) entry in
$\boldsymbol{E}_{i+1}$ is no greater than the smallest entry in
$\boldsymbol{E}_{i}$. Since $\hat{\boldsymbol{X}}$ is an optimal
solution to (\ref{opt1}), we have
\begin{align}
\|\boldsymbol{X}\|_1 \geq \|\hat{\boldsymbol{X}}\|_1 &= \|\boldsymbol{E} + \boldsymbol{X} -
\boldsymbol{E}_0 + \boldsymbol{E}_0\|_1 \nonumber\\
&\geq \|\boldsymbol{E} + \boldsymbol{X} - \boldsymbol{E}_0\|_1 - \|\boldsymbol{E}_0\|_1 \nonumber\\
& = \|\boldsymbol{X}\|_1 + \|\boldsymbol{E} - \boldsymbol{E}_0\|_1
- \|\boldsymbol{E}_0\|_1
\end{align}
Thus we obtain
\begin{align}
\|\boldsymbol{E} - \boldsymbol{E}_0\|_1 \leq
\|\boldsymbol{E}_0\|_1 \overset{(a)}{\leq} k\|\boldsymbol{E}_0\|_F
\label{lemma1:eqn3}
\end{align}
where $(a)$ comes from the Cauchy-Schwarz inequality. Also, we
have
\begin{align}
\|\boldsymbol{E}- (\boldsymbol{E}_0+\boldsymbol{E}_1)\|_F &=
\sum_{i=2}^N \|\boldsymbol{E}_i\|_F \overset{(a)}{\leq}
\frac{1}{k}\sum_{i=1}^{N-1}\|\boldsymbol{E}_{i}\|_1
\overset{(b)}{\leq} \frac{1}{k}\|\boldsymbol{E}_0\|_1 \nonumber \\
&\overset{(c)}{\leq} \|\boldsymbol{E}_0\|_F \leq
\|\boldsymbol{E}_0+\boldsymbol{E}_1\|_F \label{lemma1:eqn4}
\end{align}
where $(a)$ comes from the fact that
\begin{align}
\|\boldsymbol{E}_{i}\|_1\geq
k^2\|\boldsymbol{E}_{i+1}\|_{\infty}\geq k
\|\boldsymbol{E}_{i+1}\|_{F}
\end{align}
and the inequalities $(b)$ and $(c)$ follow from
(\ref{lemma1:eqn3}). The result (\ref{lemma1:eqn4}) implies that
\begin{align}
\|\boldsymbol{E}\|_F \leq 2\|\boldsymbol{E}_0+\boldsymbol{E}_1\|_F
\end{align}


We now prove $\|\boldsymbol{E}_0+\boldsymbol{E}_1\|_F=0$. Note
that $\boldsymbol{E}_0+\boldsymbol{E}_1$ is a sparse matrix with
at most $2k$ nonzero columns and rows. Using (\ref{lemma1:RIP}),
we have
\begin{align}
&(1-\delta_{2k})^2\left\|\boldsymbol{E}_0+\boldsymbol{E}_1\right\|_F^2
\leq \left\|\boldsymbol{A}(\boldsymbol{E}_0+\boldsymbol{E}_1)\boldsymbol{B}^H\right\|_F^2 \nonumber \\
=&\text{tr}[(\boldsymbol{A}(\boldsymbol{E}_0+\boldsymbol{E}_1)
\boldsymbol{B}^H)^H\boldsymbol{A}(\boldsymbol{E}-\sum_{i=2}^N\boldsymbol{E}_i)\boldsymbol{B}^H] \nonumber \\
=&\Re\{\text{tr}[(\boldsymbol{A}(\boldsymbol{E}_0+\boldsymbol{E}_1)\boldsymbol{B}^H)^H
\boldsymbol{A}\boldsymbol{E}\boldsymbol{B}^H]\} \nonumber \\
&-\Re\{\text{tr}[(\boldsymbol{A}(\boldsymbol{E}_0+\boldsymbol{E}_1)
\boldsymbol{B}^H)^H\boldsymbol{A}\sum_{i=2}^N\boldsymbol{E}_i\boldsymbol{B}^H]\} \nonumber \\
\leq&\Re\{\text{tr}[(\boldsymbol{A}(\boldsymbol{E}_0+\boldsymbol{E}_1)
\boldsymbol{B}^H)^H\boldsymbol{A}\boldsymbol{E}\boldsymbol{B}^H]\} \nonumber \\
&+\left|\Re\{\text{tr}[(\boldsymbol{A}(\boldsymbol{E}_0+\boldsymbol{E}_1)
\boldsymbol{B}^H)^H\boldsymbol{A}\sum_{i=2}^N\boldsymbol{E}_i\boldsymbol{B}^H]\}\right| \nonumber \\
\overset{(a)}{\leq}&\sum_{i=0}^1\sum_{j=2}^N\left|\Re\{\text{tr}((\boldsymbol{A}
\boldsymbol{E}_i\boldsymbol{B}^H)^H\boldsymbol{A}\boldsymbol{E}_j\boldsymbol{B}^H)\}\right|  \nonumber \\
=&\sum_{i=0}^1\sum_{j=2}^N\left|\Re\{\text{tr}((\boldsymbol{A}
\frac{\boldsymbol{E}_i}{\|\boldsymbol{E}_i\|_F}\boldsymbol{B}^H)^H\boldsymbol{A}
\frac{\boldsymbol{E}_j}{\|\boldsymbol{E}_j\|_F}\boldsymbol{B}^H)\}\right|\cdot
\|\boldsymbol{E}_i\|_F\|\boldsymbol{E}_j\|_F \nonumber \\
\overset{(b)}{=}&\sum_{i=0}^1\sum_{j=2}^N\frac{1}{4}\bigg|\|\boldsymbol{A}(\frac{\boldsymbol{E}_i}{\|\boldsymbol{E}_i\|_F}
+\frac{\boldsymbol{E}_j}{\|\boldsymbol{E}_j\|_F})\boldsymbol{B}^H\|_F^2
\nonumber \\
&-\|\boldsymbol{A}(\frac{\boldsymbol{E}_i}{\|\boldsymbol{E}_i\|_F}-
\frac{\boldsymbol{E}_j}{\|\boldsymbol{E}_j\|_F})\boldsymbol{B}^H\|_F^2\bigg|
\cdot \|\boldsymbol{E}_i\|_F\|\boldsymbol{E}_j\|_F \nonumber\\
\leq&\sum_{i=0}^1\sum_{j=2}^N\frac{1}{4}\bigg((1+\delta_{2k})^2\|\frac{\boldsymbol{E}_i}{\|\boldsymbol{E}_i\|_F}+
\frac{\boldsymbol{E}_j}{\|\boldsymbol{E}_j\|_F}\|_F^2
\nonumber \\
&-(1-\delta_{2k})^2\|\frac{\boldsymbol{E}_i}{\|\boldsymbol{E}_i\|_F}-
\frac{\boldsymbol{E}_j}{\|\boldsymbol{E}_j\|_F}\|_F^2\bigg)\cdot\|\boldsymbol{E}_i\|_F\|\boldsymbol{E}_j\|_F \nonumber\\
\overset{(c)}{=}&\sum_{i=0}^1\sum_{j=2}^N\frac{1}{4}((1+\delta_{2k})^2
\left(\|\frac{\boldsymbol{E}_i}{\|\boldsymbol{E}_i\|_F}\|_F^2+\|\frac{\boldsymbol{E}_j}{\|\boldsymbol{E}_j\|_F}\|_F^2\right)
\nonumber \\
&-(1-\delta_{2k})^2\left(\|\frac{\boldsymbol{E}_i}{\|\boldsymbol{E}_i\|_F}\|_{F}^2+
\|\frac{\boldsymbol{E}_j}{\|\boldsymbol{E}_j\|_F}\|_F^2)\right)\cdot\|\boldsymbol{E}_i\|_F\|\boldsymbol{E}_j\|_F \nonumber\\
=&\sum_{i=0}^1\sum_{j=2}^N\frac{1}{2}((1+\delta_{2k})^2-
(1-\delta_{2k})^2)\cdot\|\boldsymbol{E}_i\|_F\|\boldsymbol{E}_j\|_F \nonumber\\
=&2\delta_{2k}\sum_{i=0}^1\sum_{j=2}^N\left\|\boldsymbol{E}_i\right\|_F\left\|\boldsymbol{E}_j\right\|_F\nonumber\\
=&2\delta_{2k}(\left\|\boldsymbol{E}_0\right\|_F+\left\|\boldsymbol{E}_1\right\|_F)
\sum_{j=2}^N\left\|\boldsymbol{E}_j\right\|_F\nonumber\\
\overset{(d)}{\leq}&2\delta_{2k}(\left\|\boldsymbol{E}_0\right\|_F+\left\|\boldsymbol{E}_1\right\|_F)
\|\boldsymbol{E}_0+\boldsymbol{E}_1\|_F \nonumber\\
\overset{(e)}{\leq}&2\sqrt{2}\delta_{2k}\left\|\boldsymbol{E}_0+\boldsymbol{E}_1\right\|_F^2
\label{lemma1:inequality}
\end{align}
where $(a)$ comes from the fact that
\begin{align}
\boldsymbol{A}\boldsymbol{E}\boldsymbol{B}^H =
\boldsymbol{A}\boldsymbol{X}\boldsymbol{B}^H
-\boldsymbol{A}\hat{\boldsymbol{X}}\boldsymbol{B}^H =
\boldsymbol{0}
\end{align}
$(b)$ follows from the equality
\begin{align}
4\Re\{\text{tr}(\boldsymbol{P}\boldsymbol{Q}^H)\} =
\|\boldsymbol{P+Q}\|_F^2-\|\boldsymbol{P-Q}\|_F^2
\end{align}
for any complex matrices $\boldsymbol{P}$ and $\boldsymbol{Q}$,
$(c)$ is due to the reason that $\boldsymbol{E}_i$ and
$\boldsymbol{E}_j$ have disjoint supports, $(d)$ follows from
(\ref{lemma1:eqn4}), and $(e)$ can be easily verified by noting
that
\begin{align}
\|\boldsymbol{E}_0+\boldsymbol{E}_1\|_F=
(\|\boldsymbol{E}_0\|_F^2+\|\boldsymbol{E}_1\|_F^2)^{1/2}
\end{align}

If $2\sqrt{2}\delta_{2k}-(1-\delta_{2k})^2<0$, i.e.
$\delta_{2k}<1+\sqrt{2}\left(1-\sqrt{1+\sqrt{2}}\right) $, then we
have $\left\|\boldsymbol{E}_0+\boldsymbol{E}_1\right\|_F=0$ from
(\ref{lemma1:inequality}), which implies that
$\left\|\boldsymbol{E}\right\|_F=0$, i.e.
$\boldsymbol{X}=\hat{\boldsymbol{X}}$. The proof is completed
here.


\end{document}